\newcommand{\given}[1][]{\,#1\vert\,}
\newcommand{\EE}[1]{\mathbb{E}\left[#1\right]}
\DeclareMathOperator*{\argmax}{arg\,max}
\newtheorem{prop}{Proposition}
\newtheorem{lemma}{Lemma}
\newcommand{\Mh}{\hat{M}}
\def\blfootnote{\gdef\@thefnmark{}\@footnotetext}
\title{A Statistical Interpretation of the Maximum Subarray Problem}
\begin{document}
\ninept
\maketitle
\blfootnote{\copyright 2023 IEEE.  Personal use of this material is permitted.  Permission from IEEE must be obtained for all other uses, in any current or future media, including reprinting/republishing this material for advertising or promotional purposes, creating new collective works, for resale or redistribution to servers or lists, or reuse of any copyrighted component of this work in other works.}
\begin{abstract}
Maximum subarray is a classical problem in computer science that given an array of numbers aims to find a contiguous subarray with the largest sum. We focus on its use for a noisy statistical problem of localizing an interval with a mean different from background. While a naive application of maximum subarray fails at this task, both a penalized and a constrained version can succeed. We show that the penalized version can be derived for common exponential family distributions, in a manner similar to the change-point detection literature, and we interpret the resulting optimal penalty value. The failure of the naive formulation is then explained by an analysis of the estimated interval boundaries. Experiments further quantify the effect of deviating from the optimal penalty. We also relate the penalized and constrained formulations and show that the solutions to the former lie on the convex hull of the solutions to the latter.
\end{abstract}
\begin{keywords}
Maximum subarray, maximum-sum segment, change-point localization, Lagrangian relaxation, exponential family
\end{keywords}

\section{Introduction}
\label{sec:intro}

Given a one-dimensional (1D) array (i.e., a sequence) of numbers, the \emph{maximum (sum) subarray problem}
is to find a contiguous subarray with the largest sum \cite{bentley84}. Beyond its use in computer science, the maximum subarray problem has applications and generalizations in various domains. In computational biology, it is used for DNA sequence analysis (e.g.~to find GC-rich areas or DNA-binding domains) 
and has been generalized to include length constraints \cite{lin2002efficient} and alternative objectives such as maximum density \cite{max_density_segment_Lu}. In image processing and computer vision, a 2D version of maximum subarray can locate rectangles that differ the most in brightness or some other characteristic from the rest of the image, for example for astronomical images \cite{weddell2013maximum}, and can also be used to count objects \cite{lempitsky2010learning}. Parallel versions and specialized hardware implementations have been developed to accelerate the 2D maximum subarray problem \cite{takaoka_kadane2D_parallel, kadane2D_GPU}.

In this work, we consider the use of maximum subarray for a statistical problem of localizing an interval (or a rectangle in higher dimensions) with mean different than the background in high noise. We first find in Section~\ref{sec:prob} that a naive application of maximum subarray fails at this task, even when the mean difference is clearly larger than the noise level. The failure can be corrected either by solving a penalized version, in which a penalty value is subtracted from each element in the array, or a constrained version, in which the length of the estimated interval is bounded. In the paper we justify and relate these corrected formulations.

In Section~\ref{sec:stat_form}, we show that the penalized problem arises for array elements $w_t$ drawn from exponential family distributions, with the mild condition that one of the sufficient statistics is $w_t$ itself.
These include common distributions such as the Gaussian, Poisson, Negative-Binomial and Gamma. The optimal penalty is shown to be an intermediate value between the interval mean and the background mean, where the exact value depends on the distribution.

In Section~\ref{sec:alt_loc}, we provide an analysis of the estimated interval boundaries and localization error resulting from solving the penalized maximum subarray problem. 
In the case of naive maximum subarray with no penalty and zero-mean background, we show that the expected localization error is on the order of the length of the array, thus explaining the failure seen in Section~\ref{sec:prob}. 

In Section~\ref{sec:lagrange}, we relate the penalized maximum subarray problem to the constrained version, specifically to its Lagrangean form.  We observe that the solutions to the penalized problem are a subset of the solutions to the constrained one, lying at the vertices of the convex hull of the latter.  While we do not prove it formally in the paper, we suggest how this convex hull phenomenon can arise based on integer programming duality.

Section~\ref{sec:expt} presents additional experiments on localization performance as a function of the mean difference and the penalty value used, to quantify the effect of deviating from the optimal penalty. We add an illustrative example of 2D max-subarray used to localize vehicles in a synthetic aperture radar (SAR) image.

\section{The Maximum Subarray Problem and a Motivating Experiment}
\label{sec:prob}

Let $w_t \in \mathbb{R}$, $t = 1\dots,N$ denote the elements in a 1D array of length $N$. We sometimes refer to $w_t$ as a \emph{weight}. The maximum (contiguous) subarray problem is to find an interval $I = [m, \dots, M]$, $1 \leq m \leq M \leq N$, with maximal sum: 
\begin{equation}
\label{eqn:plain_kadane}
    w^* = \max_{m, M} \sum_{t=m}^{M} w_t  ~~~ (P1)
\end{equation}
This problem can be solved in $O(N)$ time\footnote{Certain generalizations of this problem with constraints and maximum-density objective also allow an $O(N)$ solution \cite{max_density_segment_Lu}.} by an elegant dynamic-programming formulation proposed by Kadane \cite{bentley84}. 

\begin{figure}[t]
\centerline{\includegraphics[width=7.8cm]{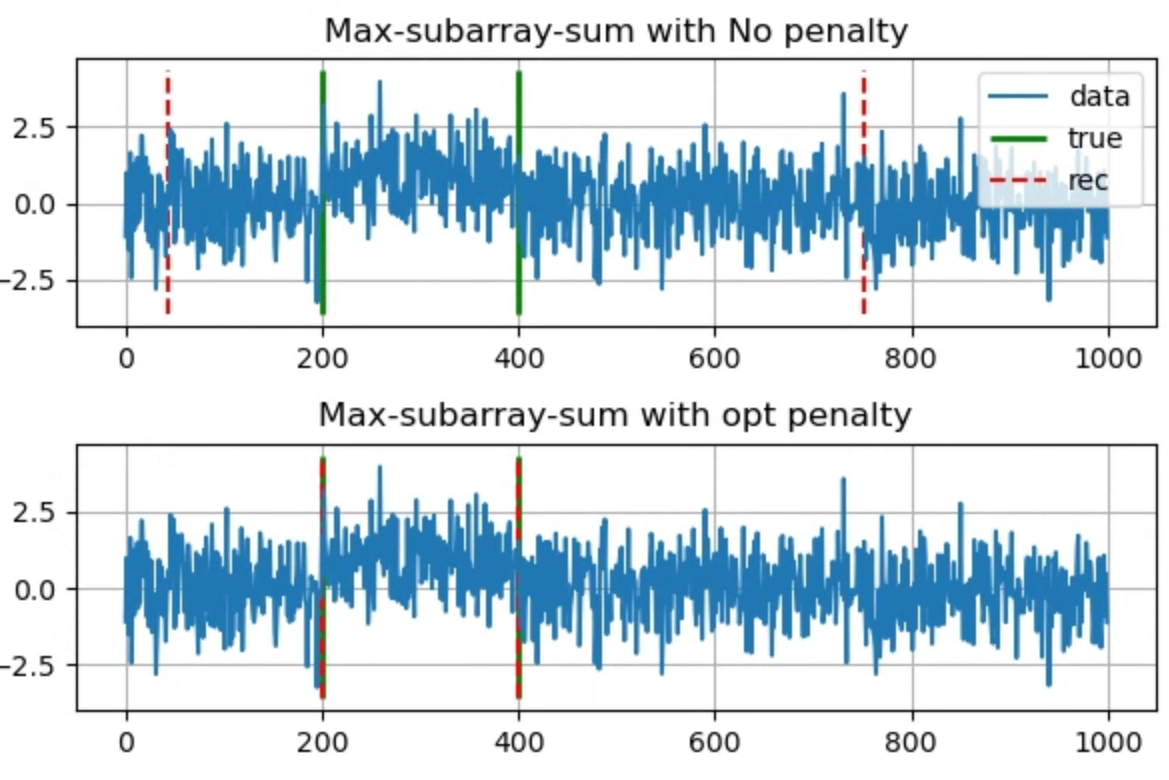}}
\caption{\label{fig:kadane_1d_no_penalty} Max-subarray localization of an interval with elevated mean. Green vertical lines show the true interval, and dashed red lines the estimated interval. (top) no penalty $\delta$ (bottom) mean/2 penalty. }
\vspace{-.1in}
\end{figure}

In this paper, we consider the use of maximum subarray \eqref{eqn:plain_kadane} and its extensions for a statistical problem of localizing the interval $[m, \dots, M]$ in high noise when it has a mean different from the rest of the array. Let us consider a motivating experiment to gain intuition. We take $w_t$ initially to be i.i.d.~zero-mean Gaussian random variables $\mathcal{N}(0,1)$, and then increase the mean by $\mu>0$ in the interval $I$. Naively, one can attempt to recover the interval by solving \eqref{eqn:plain_kadane}. 
However, even with $\mu$ large enough to visibly stand out from the noise, the recovered interval is grossly inaccurate, see Figure \ref{fig:kadane_1d_no_penalty} (top).  The reason for these large errors is the long expected run-length of cumulative sums of i.i.d.~random variables that gives rise to large false-positive regions, see Figure \ref{fig:kadane_run_lengths} (top). In Section~\ref{sec:alt_loc} we show that 
the errors in the recovered interval have expected length on the order of the full array length, making accurate localization impossible.

The failure of \eqref{eqn:plain_kadane} to localize can be rectified in two ways. First, we can subtract a penalty $\delta$ from each weight, yielding 
\begin{equation}
    \label{eqn:penalized_kadane}
    w^*(\delta) = \max_{m, M} \sum_{t=m}^{M} (w_t - \delta)~~~~~~ (P2)
\end{equation}
Alternatively, we can impose an upper bound $K$ on the interval length $|I| = M-m+1$:
\begin{equation}\label{eqn:constrained_kadane}
    w^*(K) = \max_{m, M} \sum_{t=m}^{M} w_t, ~~ M - m + 1 \le K~~~~ (P3)
\end{equation}
With an appropriate choice of $\delta$ in \eqref{eqn:penalized_kadane}, the errors in Figure~\ref{fig:kadane_1d_no_penalty} (top) reduce dramatically, accurately localizing the interval, Figure \ref{fig:kadane_1d_no_penalty} (bottom). 
In Section~\ref{sec:stat_form}, we present a statistical derivation of \eqref{eqn:penalized_kadane}, which specifies an ideal value for $\delta$. The relationship between the penalized and constrained versions \eqref{eqn:penalized_kadane} and \eqref{eqn:constrained_kadane} is discussed in Section~\ref{sec:lagrange}.

\section{Statistical Formulation}
\label{sec:stat_form}

In this section, we show that the penalized maximum subarray problem \eqref{eqn:penalized_kadane} can be derived from a statistical localization problem with exponential families. 

We consider $w_1,\dots,w_N$ to be independent random variables. Each $w_t$ follows an exponential family distribution with canonical-form density
\begin{equation}\label{eqn:exp_family}
    f(w_t) = h(w_t) \exp\left(\eta w_t + \eta'^T T(w_t) - A(\eta, \eta') \right),
\end{equation}
where one of the sufficient statistics is $w_t$ itself with corresponding natural parameter $\eta$. The (possibly vector-valued) function $T(w_t)$ captures any other sufficient statistics with natural parameters $\eta'$, and $A(\eta, \eta')$ is the log-partition function. 

We assume that for $t$ belonging to an unknown interval $[m,\dots,M]$, 
we have $\eta = \eta_1$, and elsewhere $\eta = \eta_0 < \eta_1$. The joint probability density of $(w_1,\dots,w_N) = w$ is therefore 
\begin{multline}\label{eqn:pdf}
    f(w) = \prod_{t=m}^{M} h(w_t) \exp\left(\eta_1 w_t + \eta'^T T(w_t) - A(\eta_1, \eta') \right)\\
    \times \prod_{t\notin\{m,\dots,M\}} h(w_t) \exp\left(\eta_0 w_t + \eta'^T T(w_t) - A(\eta_0, \eta') \right).
\end{multline}
We regard $\eta_0$, $\eta_1$, and $\eta'$ as known parameters and localize the interval by estimating $m, M$ via maximum likelihood. After taking the logarithm, \eqref{eqn:pdf} can be rewritten as the following log-likelihood:
\begin{multline}\label{eqn:log-likelihood}
    \ell(m, M \given w) = \sum_{t=m}^{M} \left((\eta_1 - \eta_0) w_t - A(\eta_1, \eta') + A(\eta_0, \eta') \right)\\
    {} + \sum_{t=1}^N \left( \log h(w_t) + \eta_0 w_t + \eta'^T T(w_t) - A(\eta_0, \eta') \right).
\end{multline}
The second line in \eqref{eqn:log-likelihood} does not depend on $m, M$ and can thus be omitted from the maximization. The remaining quantity is then proportional to the objective function in \eqref{eqn:penalized_kadane} with
\begin{equation}\label{eqn:delta}
    \delta = \frac{A(\eta_1, \eta') - A(\eta_0, \eta')}{\eta_1 - \eta_0}.
\end{equation}

The localization problem of maximizing log-likelihood \eqref{eqn:log-likelihood} can also be embedded into a binary hypothesis test of whether there exists an interval with parameter $\eta_1 \neq \eta_0$. Here the null hypothesis $H_0$ is that $\eta = \eta_0$ for all $t$ while the alternative $H_1$ is as described above with joint density \eqref{eqn:pdf}. Since $m$, $M$ are unknown, $H_1$ is a composite hypothesis. We can use a generalized likelihood ratio test (GLRT) to replace $m$, $M$ with their maximum likelihood estimates, by maximizing \eqref{eqn:log-likelihood}. The resulting maximal value of the first line in \eqref{eqn:log-likelihood} is then the generalized log-likelihood ratio, to be compared to a threshold to decide between $H_0$ and $H_1$.

Expression \eqref{eqn:delta} specifies the penalty $\delta$ in terms of the parameters $\eta_0$, $\eta_1$, $\eta'$. It can be interpreted as follows. First, it is the threshold to apply to $w_t$, $t \in [m,\dots,M]$, that results from the likelihood ratio test comparing $f(w_t \given H_1) / f(w_t \given H_0)$ to $1$ (i.e., equal priors on the hypotheses). Here the $t$th term in the first line of \eqref{eqn:log-likelihood} is the log-ratio $\log (f(w_t \given H_1) / f(w_t \given H_0))$. 
Second, this threshold \eqref{eqn:delta} always lies between the mean values of $w_t$ under $H_0$ and $H_1$, $\mu_0 \coloneqq \EE{w_t \given H_0}$ and $\mu_1 \coloneqq \EE{w_t \given H_1}$.
\begin{prop}
For an exponential family distribution as in \eqref{eqn:exp_family}, the threshold $\delta$ in \eqref{eqn:delta} satisfies $\mu_0 \leq \delta \leq \mu_1$.
\end{prop}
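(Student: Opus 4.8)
The plan is to exploit two standard properties of the log-partition function $A(\eta, \eta')$, viewed as a function of $\eta$ with $\eta'$ held fixed. The first is the mean identity of exponential families: differentiating the normalization condition $\int h(w)\exp(\eta w + \eta'^T T(w) - A(\eta,\eta'))\,dw = 1$ under the integral sign yields $\frac{\partial A}{\partial \eta}(\eta,\eta') = \mathbb{E}[w_t]$, the expected value of the natural sufficient statistic $w_t$. Evaluating this at $\eta_0$ and $\eta_1$ identifies the two endpoints as endpoint derivatives: $\mu_0 = \frac{\partial A}{\partial \eta}(\eta_0,\eta')$ and $\mu_1 = \frac{\partial A}{\partial \eta}(\eta_1,\eta')$. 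The second property is convexity: a further differentiation gives $\frac{\partial^2 A}{\partial \eta^2}(\eta,\eta') = \mathrm{Var}(w_t) \geq 0$, so $A(\cdot,\eta')$ is convex and its derivative $\frac{\partial A}{\partial \eta}(\cdot,\eta')$ is nondecreasing in $\eta$.

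With these in hand, I would observe that $\delta$ in \eqref{eqn:delta} is precisely the slope of the secant line of $A(\cdot,\eta')$ between the points $\eta_0$ and $\eta_1$. The result then follows from the elementary sandwich property of convex functions, namely that the secant slope of a convex function between two points lies between its derivatives at those points. Concretely, the mean value theorem supplies some $\xi \in (\eta_0,\eta_1)$ with $\delta = \frac{\partial A}{\partial \eta}(\xi,\eta')$, and monotonicity of the derivative then gives $\mu_0 = \frac{\partial A}{\partial \eta}(\eta_0,\eta') \leq \frac{\partial A}{\partial \eta}(\xi,\eta') \leq \frac{\partial A}{\partial \eta}(\eta_1,\eta') = \mu_1$, which is exactly $\mu_0 \leq \delta \leq \mu_1$. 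The assumption $\eta_0 < \eta_1$ both ensures the denominator in \eqref{eqn:delta} is nonzero and fixes the orientation of the inequalities.

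The main obstacle is not the convexity argument, which is routine, but rather justifying the mean and variance identities rigorously, specifically the interchange of differentiation and integration used to evaluate $\frac{\partial A}{\partial \eta}$ and $\frac{\partial^2 A}{\partial \eta^2}$. For a regular exponential family with $(\eta,\eta')$ in the interior of the natural parameter space, this interchange is valid by a standard dominated-convergence argument, and I would either cite this fact or explicitly restrict attention to that interior. Once the two properties are in place, recognizing $\delta$ as a secant slope and sandwiching it between the endpoint derivatives is immediate.
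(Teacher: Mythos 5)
Your proof is correct and follows essentially the same route as the paper's: both identify $\mu_0$ and $\mu_1$ as the derivatives $\partial A/\partial\eta$ at $\eta_0$ and $\eta_1$, recognize $\delta$ as the chord (secant) slope of the convex function $A(\cdot,\eta')$, and sandwich that slope between the endpoint derivatives using monotonicity of the derivative. Your added care about differentiating under the integral sign and the mean value theorem formulation are refinements of, not departures from, the paper's argument.
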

\begin{proof}
We exploit two properties of the log-partition function $A(\eta, \eta')$ of exponential families. First, the partial derivatives of $A(\eta, \eta')$ give moments of the corresponding sufficient statistics, and in particular, 
\begin{equation}\label{eqn:dA/deta}
    \left.\frac{\partial A}{\partial\eta}\right\rvert_{\eta=\eta_i} = \EE{w_t \given H_i} = \mu_i, \quad i =0, 1.
\end{equation}
Second, $A(\eta, \eta')$ is a convex function. Viewing $A(\eta, \eta')$ as a function of $\eta$ for fixed $\eta'$, the value of $\delta$ in \eqref{eqn:delta} can be recognized as the slope of the chord connecting the points $(\eta_0, A(\eta_0,\eta'))$ and $(\eta_1, A(\eta_1,\eta'))$. By convexity, the slope of this chord must be in between the slopes of the tangents at $\eta_0$, $\eta_1$, namely $\mu_0$, $\mu_1$ \eqref{eqn:dA/deta}. Convexity also implies that $\partial A/\partial\eta$ is a non-decreasing function of $\eta$, and with the assumption $\eta_0 < \eta_1$, this gives the ordering $\mu_0 \leq \delta \leq \mu_1$ ($\eta_0 > \eta_1$ would be analogous).
\end{proof}

\noindent\textbf{Example: Gaussian.} The Gaussian distribution $\mathcal{N}(\mu, \sigma^2)$ is an exponential family of the form in \eqref{eqn:exp_family} with $\eta = \mu/\sigma^2$, $\eta' = -1/(2\sigma^2)$, and $T(w_t) = w_t^2$. The log-partition function is given by 
\[
A(\eta, \eta') = -\frac{\eta^2}{4\eta'} - \frac{1}{2} \log(-2\eta') = \frac{\mu^2}{2\sigma^2} + \frac{1}{2} \log(\sigma^2).
\]
Substitution into \eqref{eqn:delta} yields $\delta = \frac{\mu_0 + \mu_1}{2}$ as the optimal penalty, i.e., the midpoint between the means.

\noindent\textbf{Example: Poisson.} The Poisson distribution with rate $\lambda$ satisfies \eqref{eqn:exp_family} with $\eta = \log\lambda$, no $\eta'$ or $T(w_t)$, and $A(\eta) = e^\eta = \lambda$. The optimal penalty \eqref{eqn:delta} is thus given by 
\begin{equation}\label{eqn:deltaPoisson}
    \delta_{\mathrm{Pois}} = \frac{\lambda_1 - \lambda_0}{\log\lambda_1 - \log\lambda_0}.
\end{equation}
It can be verified that this is always between $\lambda_0$ and $\lambda_1$.

In practice it is rare to know both background and foreground means $\mu_0$ and $\mu_1$. A more typical case is that we know or can estimate the background mean $\mu_0$ (often it can be assumed to be $0$), and we have a prior on the difference $\Delta\mu = \mu_1 - \mu_0$, or have a tolerance in mind, where deviations below $\Delta \mu$ may not be of interest. Then, we can set $\mu_1=\mu_0 + \Delta \mu$, and proceed as if both were known. 

The above derivation has close parallels to offline change-point detection \cite{truong2020selective} and more distantly to CUSUM statistics for online change-point detection \cite{CUSUM_Page_1954,Granjon2014CUSUM,basseville1993detection}, specifically in the use of maximum likelihood and exponential families \cite{frick2014multiscale}. 
Our interval localization problem can be seen as an offline problem of detecting two changes, but where the distributions before and after the interval are the same. This property could be 
why it admits the more efficient $O(N)$ algorithm of Kadane rather than the $O(N^2)$ of the more generic dynamic programming algorithm \texttt{Opt} in \cite{truong2020selective} for $K=2$.

\section{Localization Analysis}
\label{sec:alt_loc}

\begin{figure}[t]
\centerline{\includegraphics[width=8cm]{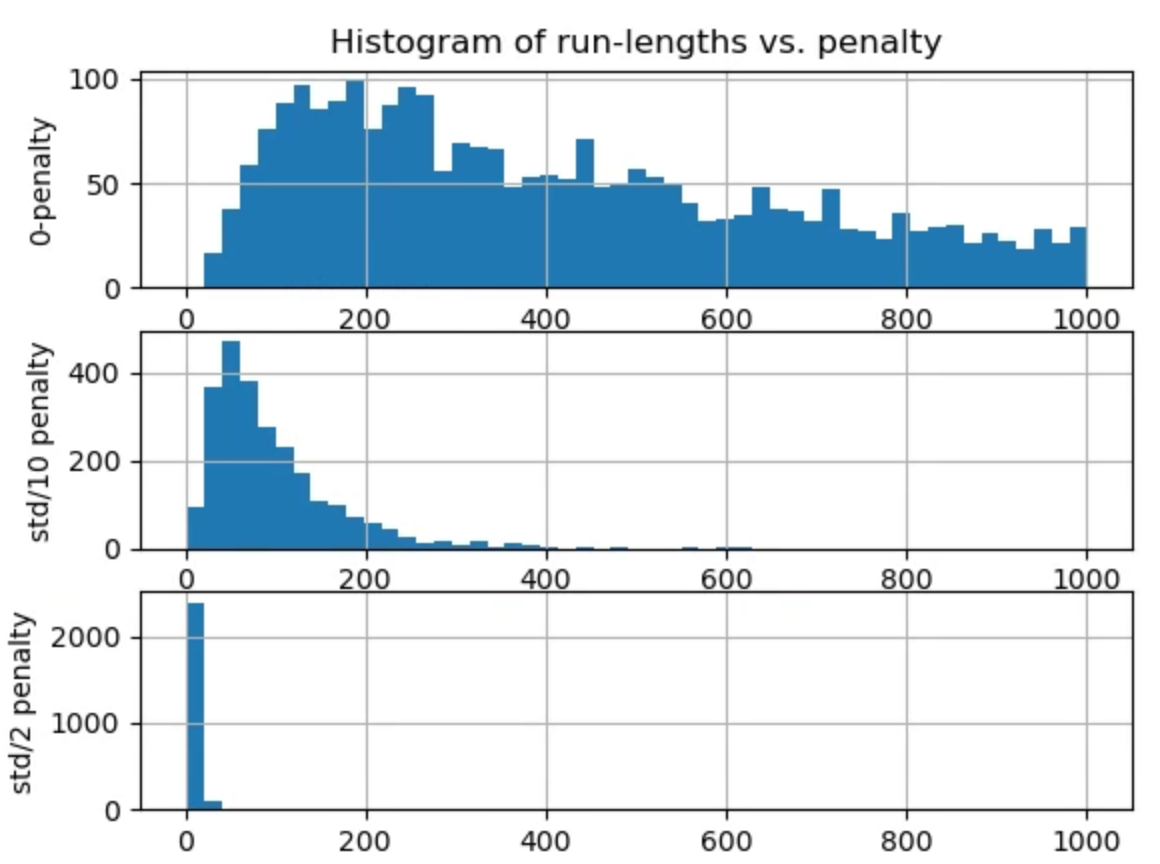}}
\caption{ \label{fig:kadane_run_lengths} Histogram of run-lengths (i.e., lengths) of max subarray-sum regions for zero-mean Gaussian weights, $N=1000$. (top) penalty $\delta = 0$ (middle) $\delta = 1/10$ (bottom) $\delta = 1/2$. 
}
\end{figure}

In this section we analyze the localization error of the penalized max-subarray sum algorithm (P2) in (\ref{eqn:penalized_kadane}). Denote by $\hat{m}, \hat{M}$ the estimated boundaries corresponding to the $\argmax$ in (P2), and $m, M$ the true boundaries. $N$ is the length of the full array.  We will show that for $\delta=0$, accurate localization is essentially impossible with the error $\hat{M}-M \propto N-M$ in the case $\Mh \geq M$, i.e., the localization error is proportional to $N$ (we focus on the right boundary $M$, but the same analysis applies to left boundary $m$). On the other hand, as soon as $\delta > 0$, the localization error is 
independent of $N$. While precise distributions of run-lengths and localization errors can be derived based on stopping times of discrete Brownian motion (Gaussian random walks) with reflecting and absorbing boundaries \cite{domine1996firstPassage, blasi1976_random_walk}, here we take a shortcut using symmetry arguments. The authors of \cite{drawdown_magdon_ismail_2004} analyze the maximum drawdown problem (equivalent to the maximum subarray after changing array sign), and characterize the weights of the max subarray regions. However, they do not address the lengths of the max subarrays. 

\begin{lemma}
In the i.i.d.~Gaussian setting, with $\hat{M} \geq M$, the expected localization error of (\ref{eqn:penalized_kadane}) with $\delta=0$ is $E[\hat{M}-M] = (N-M)/2$. 
\end{lemma}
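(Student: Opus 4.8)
The plan is to reduce the localization error to the location of the maximum of a driftless Gaussian random walk and then exploit a time-reversal symmetry of that walk. First I would pass to prefix sums $S_t = \sum_{s=1}^{t} w_s$ (with $S_0 = 0$), so that the objective in \eqref{eqn:penalized_kadane} with $\delta = 0$ reads $\sum_{t=\mh}^{\Mh}(w_t) = S_{\Mh} - S_{\mh-1}$, and the optimal right endpoint is the maximizer of $S_t$ over the feasible range (with $S_0$ always an admissible prefix). Because $\delta = 0$ and the background is zero-mean, the increments $w_{M+1},\dots,w_N$ past the true boundary are i.i.d.\ $\mathcal{N}(0,1)$, so $V_j \coloneqq S_{M+j} - S_M$, $j = 0,\dots,N-M$, is a driftless Gaussian random walk that is independent of the first $M$ weights. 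On the event $\{\Mh \geq M\}$ the global maximum of the prefix sums is attained at or beyond $M$, hence it equals $S_M + \max_{0\le j\le N-M} V_j$ and is located at $\Mh = M + \tau$, where $\tau \coloneqq \argmax_{0\le j\le N-M} V_j$. Thus $\Mh - M = \tau$, and since the weights are continuous the maximizer is a.s.\ unique.

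The key step is a symmetry argument for $\tau$. Writing $L \coloneqq N-M$, I would introduce the reversed, recentered walk $\tilde V_j \coloneqq V_{L-j} - V_L$. Its $j$th increment is $\tilde V_j - \tilde V_{j-1} = -(V_{L-j+1} - V_{L-j}) = -\,w_{M+L-j+1}$, i.e.\ the original i.i.d.\ increments negated and taken in reverse order; because $\mathcal{N}(0,1)$ is symmetric, $\tilde V$ has the same law as $V$. Moreover $\argmax_j \tilde V_j = \argmax_j V_{L-j} = L - \tau$. Equating the two laws gives $\tau \stackrel{d}{=} L - \tau$, and taking expectations yields $\EE{\tau} = L/2 = (N-M)/2$, that is $\EE{\Mh - M} = (N-M)/2$.

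The main obstacle is the conditioning on $\{\Mh \geq M\}$, which the clean symmetry above does not quite respect. Indeed this event can be written as $\{G \geq Y\}$, where $G \coloneqq \max_{0\le j\le L} V_j$ and $Y$ is the running maximum of the earlier prefix sums relative to $S_M$, so that $Y$ depends only on the first $M$ weights and is independent of $V$. The reversal $\tilde V$ preserves the \emph{law} of $V$ but maps $G$ to $\max_j \tilde V_j = G - V_L \neq G$, so it does not fix the conditioning event; since $\tau$ and $G$ are in general dependent, the conditional expectation can deviate from $L/2$ for an arbitrary left-hand distribution of $Y$. I would close this gap by working in the signal-dominant regime the lemma targets: when the elevated interval makes $S_M$ the running maximum over $\{0,\dots,M\}$ with high probability, one has $Y \le 0$, the event $\{\Mh \ge M\} = \{G \ge Y\}$ becomes essentially automatic, and the unconditional symmetry applies verbatim to recover $\EE{\Mh - M} = (N-M)/2$. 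This is exactly the ``shortcut using symmetry'' the text alludes to, and it is the one place where the argument is heuristic; a fully rigorous treatment would instead invoke the first-passage/stopping-time distributions for Gaussian random walks with reflecting and absorbing boundaries cited in the surrounding discussion.
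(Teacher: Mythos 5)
Your proposal is correct and takes essentially the same approach as the paper: reduce $\hat{M}-M$ (on the event $\{\hat{M}\ge M\}$) to the location $\tau$ of the maximum of the driftless Gaussian tail walk, and obtain $\EE{\tau}=(N-M)/2$ by pairing each trajectory with a reversed one. Yours is in fact the more careful rendering: pure reversal as literally stated in the paper maps the arg-max to $(N-M)$ minus the arg-min, so the negation of increments you add (valid by symmetry of $\mathcal{N}(0,1)$) is exactly what makes the pairing argument go through, and your caveat about the conditioning on $\{\hat{M}\ge M\}$ not being respected by the symmetry flags a genuine subtlety that the paper's proof sketch silently ignores.
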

\begin{proof}[Proof sketch] Given $\Mh \geq M$, consider the cumulative sum of the tail portion of the array beyond $M$: $c(\tau) = \sum_{t=M+1}^{\tau} w_t$. If the region $M+1, ..., \tau$ has a positive weight $w_{M+1} + \dots + w_\tau > 0$, then subarray $[m, ..., \tau]$ would be an improvement over $[m, M]$, and hence we will overestimate the true region ending at $M$.  The optimal value of the improvement is achieved at $\arg\max_{M < t \le N} c(t)$.
Now $c(t)$ from $M+1$ to $N$ is a simple Gaussian random walk, a sum of i.i.d.~Gaussian increments with $0$-mean. So for any trajectory $w_{M+1}, ... , w_N$  which achieves a maximum in $c(t)$ at $M+1+ \tau$, there is a reverse trajectory $v_{M + 1 + i} = w_{N-i}$, which achieves a maximum at $N - \tau$. So by pairing each trajectory with its reverse, the expected location of the $\arg\max$ of $c(t)$ is exactly the middle $M + (N - M)/2$, and the expected error is $(N-M)/2$.
\end{proof}

For the converse, as soon as we use any penalty $\delta > 0$, our localization error stops being a function of $N$. 
\begin{lemma}
For $\delta>0$ the expected localization error is independent of the length of the array $N$.
\end{lemma}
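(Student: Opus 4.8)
The plan is to reuse the random-walk reduction from the proof of Lemma~1, but to exploit the crucial difference that the penalized increments now carry a strictly negative drift. As before I condition on the overestimation event $\hat{M} \ge M$ and study the penalized tail sum $c(\tau) = \sum_{t=M+1}^{\tau}(w_t - \delta)$ for $M < \tau \le N$, whose argmax determines the error $\hat{M} - M$. In the zero-mean Gaussian background the increments $w_t - \delta$ are i.i.d.\ $\mathcal{N}(-\delta, 1)$, so $c(\cdot)$ is a Gaussian random walk with drift $-\delta < 0$. Whereas the zero-drift walk of Lemma~1 wanders symmetrically and places its maximum near the midpoint of the window, a walk with negative drift descends to $-\infty$ almost surely, so its running maximum is attained early and the location of that maximum is a tight random variable.

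Concretely, I would bound the tail of the argmax. Writing $\theta$ for the right-boundary localization error, the event $\{\theta > n\}$ forces the walk to exceed its starting level $0$ at some time beyond $n$, i.e.\ $\{\theta > n\} \subseteq \{\exists\, k > n : c(M+k) > 0\}$. A union bound then gives
\begin{equation*}
    P(\theta > n) \le \sum_{k > n} P\bigl(c(M+k) > 0\bigr) = \sum_{k > n} Q\bigl(\delta\sqrt{k}\bigr),
\end{equation*}
where $Q$ is the Gaussian tail function and I have used that $c(M+k) \sim \mathcal{N}(-k\delta, k)$. The Chernoff bound $Q(\delta\sqrt{k}) \le e^{-\delta^2 k/2}$ makes the right-hand side a convergent geometric series, so $P(\theta > n) \le C_\delta\, e^{-\delta^2 n/2}$ for a constant $C_\delta$ depending only on $\delta$. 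Crucially this bound holds for every window length, since restricting the sum to $k \le N - M$ can only remove nonnegative terms. Summing the tail, $E[\theta] = \sum_{n\ge 0} P(\theta > n) \le C_\delta \sum_{n \ge 0} e^{-\delta^2 n/2} < \infty$, a finite quantity that does not depend on $N$; the analogous bound for the left boundary follows by symmetry.

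The main obstacle is ensuring the bound is genuinely uniform in $N$ rather than merely finite for each fixed $N$, which is why I phrase the tail event so that it is monotone in the window and note that truncation to $k \le N - M$ only shrinks the probability. A secondary point is the underestimation case $\hat{M} < M$: there the error is controlled by excursions of the penalized walk inside the true interval $[m,\dots,M]$, whose length is fixed and independent of $N$, so it contributes a bounded amount as well. Finally, I would observe that the same exponential tail shows $\theta$ converges in distribution as $N \to \infty$ to the argmax of the global maximum of the negative-drift walk, yielding the sharper statement that $E[\hat{M} - M]$ tends to a finite limit; relating $C_\delta$ and this limit to the ladder-epoch structure of the walk would recover the quantitative dependence on $\delta$ suggested by Figure~\ref{fig:kadane_run_lengths}.
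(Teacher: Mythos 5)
Your proof is correct and, while it starts from the same reduction as the paper---condition on $\hat{M} \ge M$ and study the penalized tail walk $c(\tau) = \sum_{t=M+1}^{\tau}(w_t-\delta)$, which has drift $-\delta$---the key analytic step is genuinely different. The paper's sketch decomposes $c$ into the linear trend $-\delta(\tau-M)$ plus a zero-mean walk $\bar{c}$, invokes the drawdown analysis of \cite{drawdown_magdon_ismail_2004} to assert that the maximum of $\bar{c}$ grows like $O(\sqrt{\tau-M})$, and concludes that the trend dominates beyond a horizon not depending on $N$. You instead bound the location of the argmax directly: $\{\theta>n\}$ forces $c(M+k)\ge 0$ for some $k>n$, a union bound plus the Gaussian tail (Chernoff) estimate $Q(\delta\sqrt{k})\le e^{-\delta^2 k/2}$ give $P(\theta>n)\le C_\delta e^{-\delta^2 n/2}$ uniformly in $N$, and summing tails bounds $E[\theta]$ by a constant depending only on $\delta$. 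Your route is more self-contained (no external drawdown result) and quantitatively stronger: it yields an explicit exponential tail, an explicit $\delta$-dependence of the error bound (roughly $2/\delta^2$ for small $\delta$), and as you note a limiting distribution for the error as $N\to\infty$; by contrast, the paper's ``trend dominates the $O(\sqrt{\cdot})$ fluctuation'' argument describes typical behavior and would itself require a maximal inequality or a tail bound of exactly your kind to rigorously control the \emph{expectation} of the argmax. One caveat, which your argument shares with the paper's own sketch: the inclusion $\{\theta>n\}\subseteq\{\exists\, k>n: c(M+k)\ge 0\}$ comes from comparing $[\hat{m},\hat{M}]$ with $[\hat{m},M]$, which presumes $\hat{m}\le M$; if the estimated interval lay entirely in the tail beyond $M$, this comparison would be unavailable, so strictly speaking both proofs address the regime where the estimate still reaches back to the true right boundary.
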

\begin{proof}[Proof sketch] Consider again $c(t)$ for $t > M$.  It can be decomposed into a negative linear trend term $-\delta (\tau - M)$ and a standard zero-mean Brownian motion $\bar{c}(t)$. From the analysis in \cite{drawdown_magdon_ismail_2004}, the weight of the maximum region in the zero-mean array $\bar{c}(t)$ over $[M, ... ,\tau]$ is $O(\sqrt{\tau-M})$. For $\tau$ large enough (which is a function of the constant in front of $\sqrt{\tau - M}$, but not of $N$), the linear trend term will dominate and the sum will be negative. Therefore the optimal $\hat{\tau}$ occurs earlier than this and $\hat{\tau}-M$ does not depend on $N$. 
\end{proof}

\section{Lagrangean Interpretation}
\label{sec:lagrange}

\begin{figure}[t]
\centerline{\includegraphics[width=7.5cm]{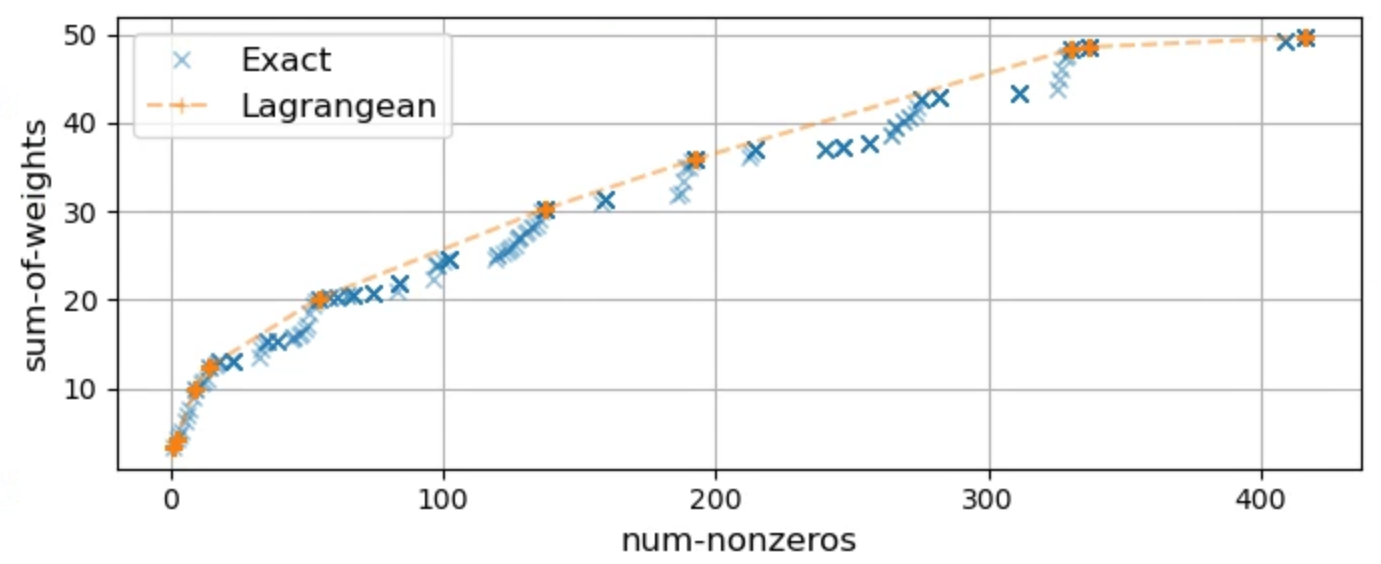}}
\caption{ \label{fig:lagrangean_convex_hull} Lagrangean relaxation (P2, orange) finds a subset of length-constrained solutions (P3, blue) that lie on vertices of the convex hull. For nonzeros $\in [416, 500]$, the solution to (P3) is the same.}
\end{figure}

We now relate the length-constrained max subarray sum in \eqref{eqn:constrained_kadane} to the statistical formulation in Section \ref{sec:stat_form}. In principle, if we want to find a solution with a desired cardinality, there exist efficient $O(N)$ algorithms that directly solve the length-constrained max subarray problem \cite{lin2002efficient,max_density_segment_Lu}. Here, our goal is to understand the relationship between the two problems. 

Adding a Lagrangean penalty on the region-size constraint in \eqref{eqn:constrained_kadane} and simplifying, we obtain the penalized version in \eqref{eqn:penalized_kadane}. This suggests two possibilities for using \eqref{eqn:penalized_kadane}. First, based on the discussion in Section \ref{sec:stat_form}, we can apply \eqref{eqn:penalized_kadane} directly if we have a prior on $\Delta \mu$, by setting $\mu_1 = \mu_0 + \Delta\mu$ and 
$\delta$ accordingly. Alternatively, we can use \eqref{eqn:penalized_kadane} to find a solution satisfying the length constraint in \eqref{eqn:constrained_kadane} via a bisection search to find the smallest $\delta$ that satisfies $M - m + 1 \le K$. Within each bisection iteration we have to solve the unconstrained problem (P1) with modified weights $\tilde{w}_t = w_t - \delta$. What is the relationship between the families of solutions to \eqref{eqn:penalized_kadane} and \eqref{eqn:constrained_kadane}?

Unlike convex optimization problems (where under some technical conditions strong duality holds), for these discrete optimizations, the set of solutions $\{w^*(K)\}$ to (P3) is not equivalent to $\{w^*(\delta)\}$ of (P2). We observe empirically that $\{w^*(\delta)\}$ includes those solutions from $\{w^*(K)\}$ that lie at the corner points of the convex-hull of all solutions in the (weight, cardinality) space. We show an example in Figure \ref{fig:lagrangean_convex_hull}. We have $N=500$, the optimal subarray with $\delta=0$ has length 416, $\{w^*(K)\}$ (labeled exact) is shown with $K=[1,\dots,500]$ in steps of 1. We see that the set of solutions $\{w^*(\delta)\}$ of (P2), in orange, are the corner-points on the convex-hull of (P3) solutions $\{w^*(K)\}$, in blue. One could argue that these penalized solutions on the convex hull have a particularly good trade-off of region-weight vs cardinality w.r.t. other constrained solutions.

We do not formally prove this result here, but we conjecture that it comes from integer programming duality, namely the equivalence of the Lagrangean dual to the linear program (LP) over the convex hull for the remaining constraints (without the dualized constraints), see \cite[Thm.~11.4]{LP_book_bertsimas1997}.  Max subarray can be represented as an integer linear program, with binary variables $z_t$ denoting whether $t$ belongs to the region. By dualizing the region-size constraint $\sum z_t \le K$, we are left with the LP over the convex hull of $z$ satisfying contiguity constraints. We leave a formal proof for future work.

\begin{figure}[t]
\centerline{\includegraphics[width=8cm]{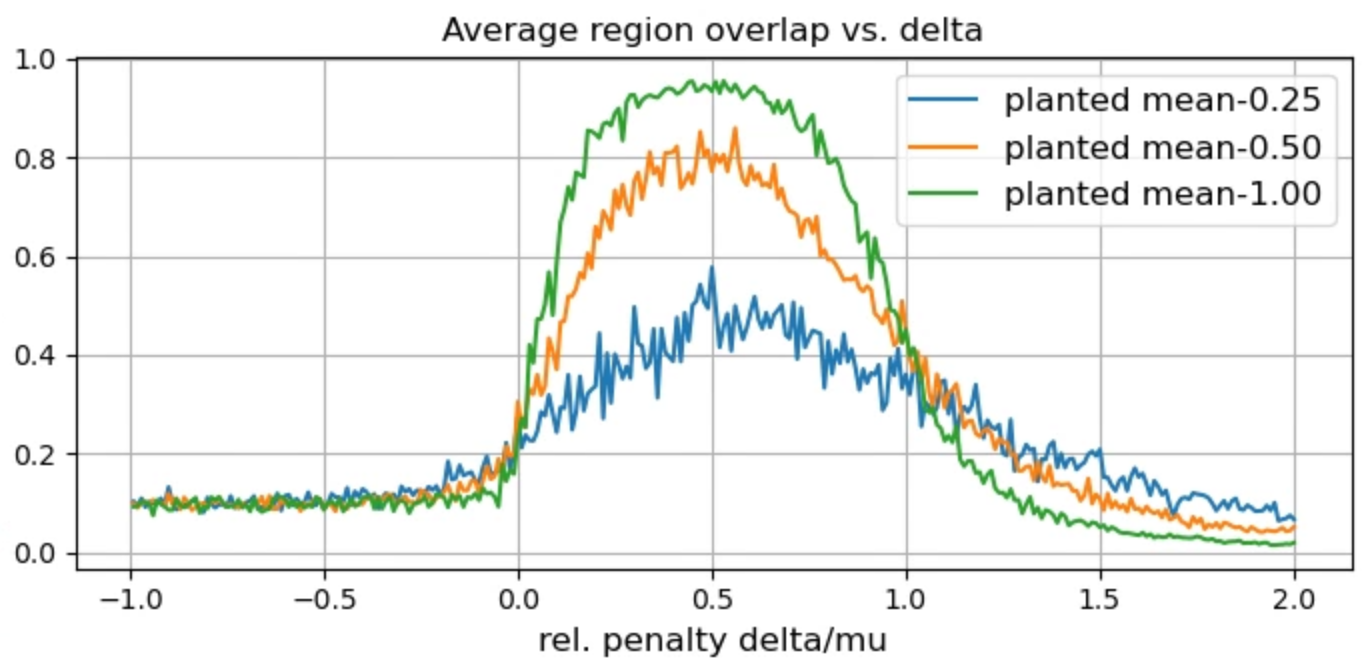}}
\centerline{\includegraphics[width=8cm]{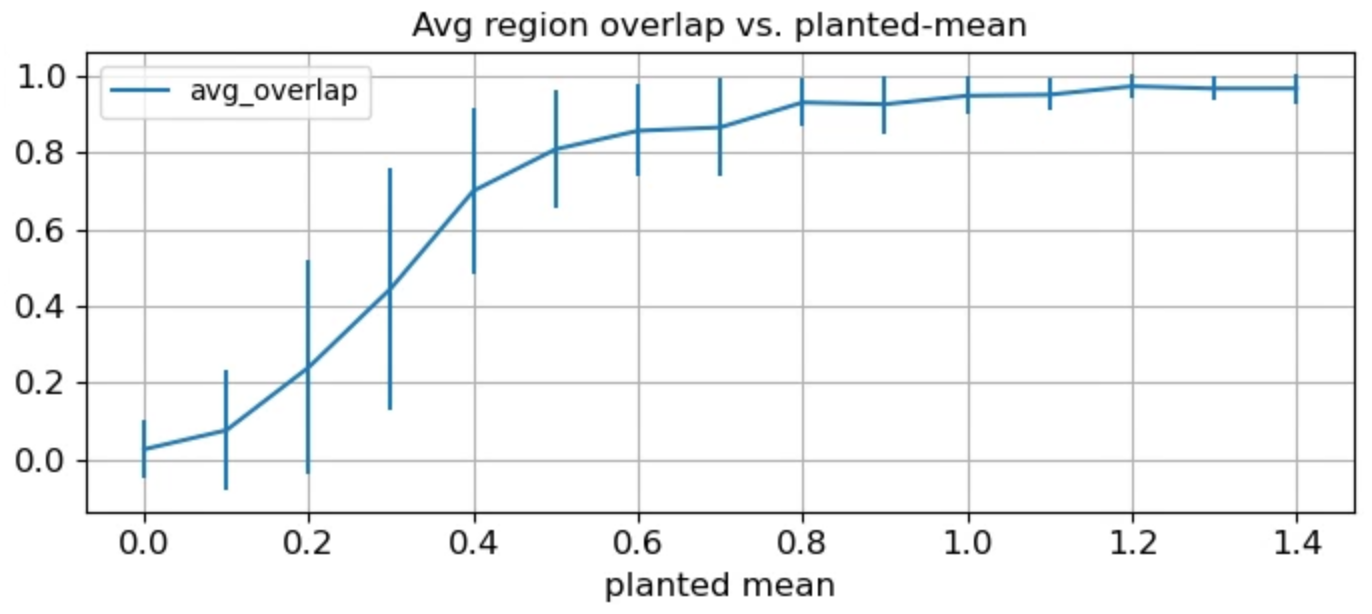}}
\caption{ \label{fig:kadane_region_loc} Region localization for Gaussian random variables. (top) Region overlap vs.~$\delta / \mu$. (bottom) vs.~$\mu$ for optimal $\delta$.}
\vspace{-0.1in}
\end{figure}

\section{Numerical Simulations}
\label{sec:expt}

We illustrate the performance of penalized max subarray-sum in planted region localization in Figure \ref{fig:kadane_region_loc}.  We take a zero-mean unit-variance i.i.d.~Gaussian vector of length $N=1000$, and plant a region of size $100$ with mean $\mu$. We compute
the overlap of the ground-truth region $I^*$ with the estimated one $\hat{I}$ based on the overlap metric $\frac{|\hat{I} \cap I^*|}{|\hat{I} \cup I^*|}$. In the upper plot we show
overlap as a function of $\delta / \mu$ for 3 values of $\mu$, $[0.25, 0.5, 1.0]$. We see that
with a heavy mean $\mu=1$ there is near-perfect localization for $\delta=\mu/2$ and near-optimal performance for a range of $\delta$,  while for situations with weaker signal, overlap is still maximized at $\delta=\mu/2$, but the maximum overlap value is reduced.  In the lower plot we instead vary $\mu$, and set $\delta=\max(\mu/2, 0.25)$. We set a floor of $0.25$ as below it we would effectively run unconstrained max subarray-sum, with overly long 
recovered regions. 

In Figure \ref{fig:kadane_Poisson_loc} (top) we illustrate localization in the Poisson setting. The background $w_t$'s have rate $\lambda_0=5$, while in a planted region the rate is slightly higher at $\lambda_1=6$.  The optimal Poisson penalty $\delta_{\mathrm{Pois}}$ \eqref{eqn:deltaPoisson} allows accurate localization despite the region rate being close to background. In the lower plot we show region overlap versus $\delta$. Performance decays away from $\delta_{\mathrm{Pois}}$.

\begin{figure}[t]
\centerline{\includegraphics[width=7.9cm]{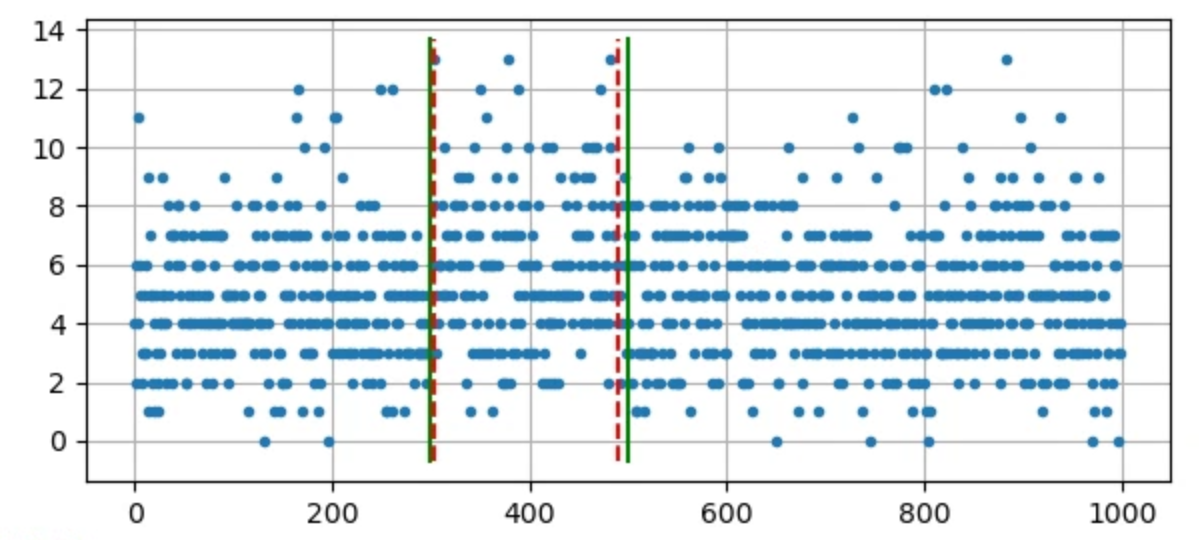}}
\centerline{\includegraphics[width=7.9cm]{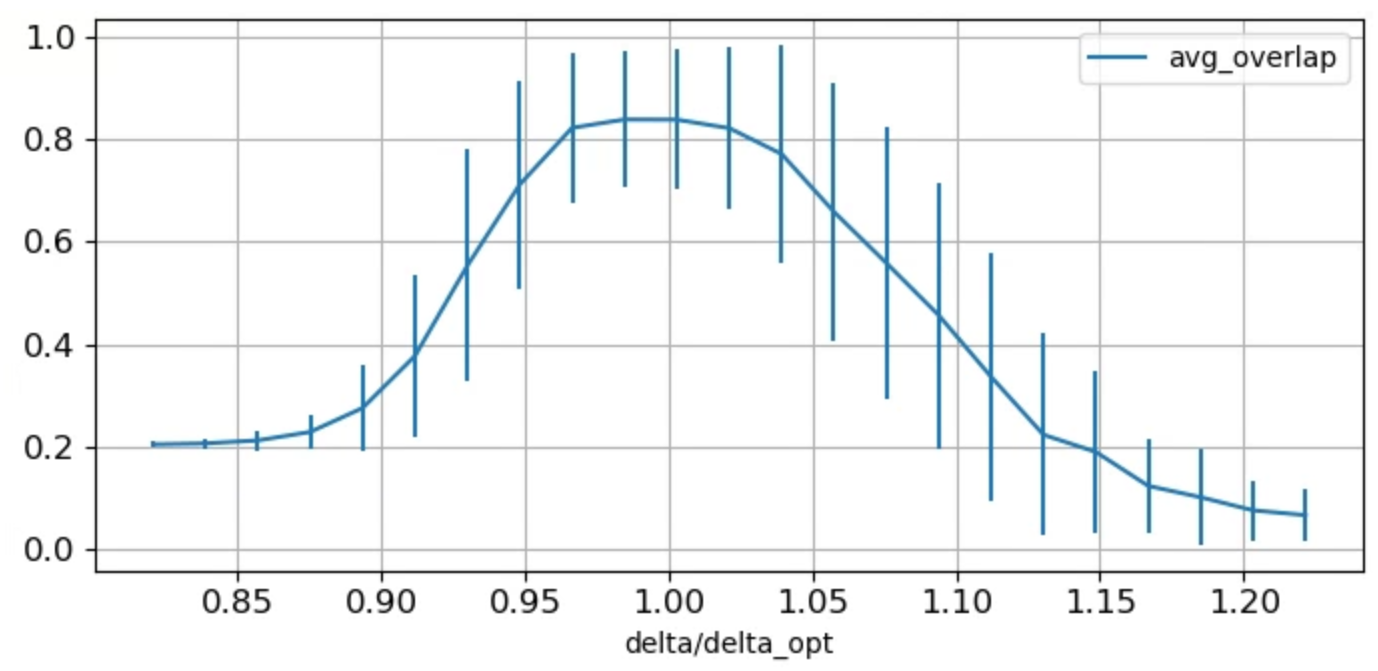}}
\caption{ \label{fig:kadane_Poisson_loc} Region localization for Poisson random variables. (top) Example time-series and localization under optimal $\delta = \delta_{\mathrm{Pois}}$. (bottom) Region overlap 
as a function of $\delta/\delta_{\mathrm{Pois}}$ (`delta\_opt').}
\vspace{-0.1in}
\end{figure}

In Figure \ref{fig:SAR} we apply a 2D version of max subarray to 
vehicle localization in a SAR image.\footnote{\url{https://www.sandia.gov/app/uploads/sites/124/2021/06/Ka-band-image-of-a-variety-of-military-vehicles-in-the-desert-near-Albuquerque-NM..png}, courtesy of Sandia National Laboratories, Radar ISR} We set $\delta$ using a prior on expected 
object size (corresponding to $K$) 
and apply max-2D-subarray iteratively, masking objects found earlier. Varying $\delta$ by $\pm 15\%$ results in similar detected regions, as seen in Figure~\ref{fig:SAR_sensitivity}. In this illustrative example the analysis is done on raw pixels, without any preprocessing (e.g.~edge filters, etc.).

\begin{figure}%
    \centering
    \subfloat[\centering]{{\includegraphics[width=3.8cm]{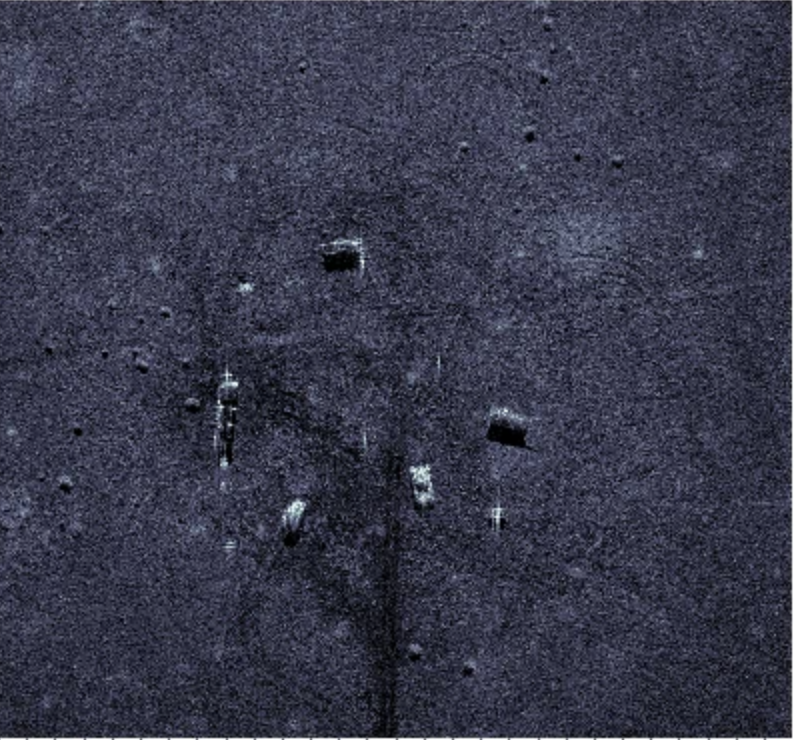} }}%
    \qquad
    \subfloat[\centering]{{\includegraphics[width=3.8cm]{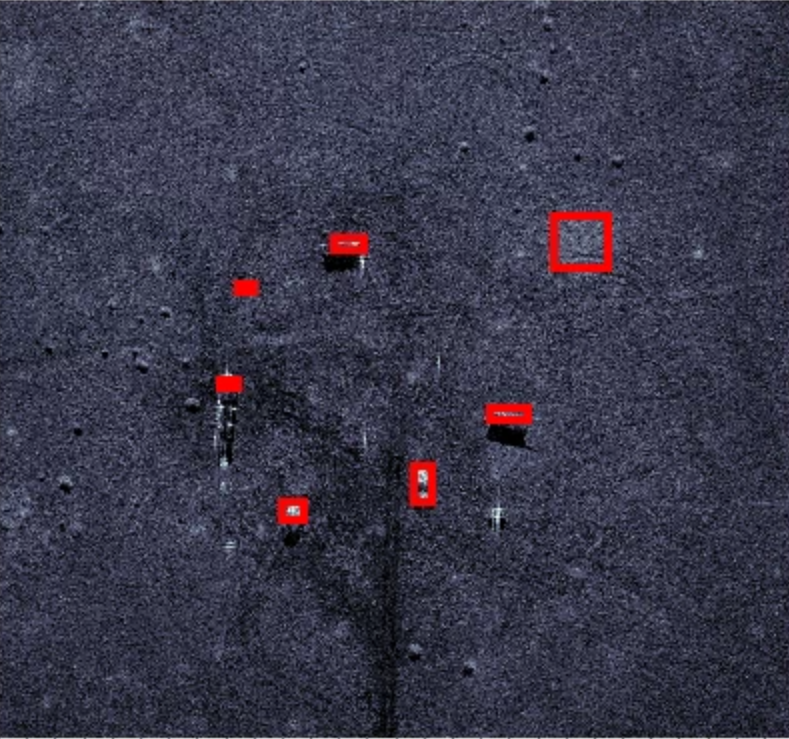} }}%
    \caption{ \label{fig:SAR} SAR object localization. (a) image (b) max subarray regions, estimated iteratively by masking regions found thus far.}
    \vspace{-0.1in}
\end{figure}

\begin{figure}[h]%
    \centering
    \subfloat[\centering]{{\includegraphics[width=3.9cm]{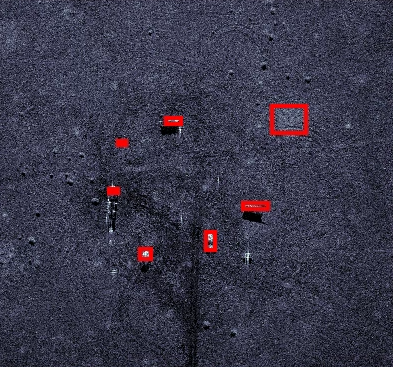} }}%
    \qquad
    \subfloat[\centering]{{\includegraphics[width=3.9cm]{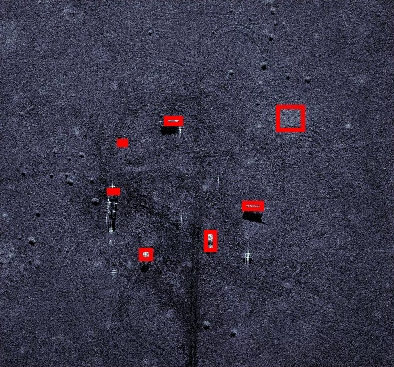} }}%
    \caption{ \label{fig:SAR_sensitivity} SAR object localization using maximum subarray with $\delta$ changed by (a) $-15\%$ and (b) $+15\%$ from the value in Fig.~\ref{fig:SAR}(b).}
\end{figure}

\vfill\pagebreak

\bibliographystyle{IEEEbib}
\bibliography{heavy_sets}

\clearpage

\end{document}